\newtheorem{theorem}{Theorem}[]
\newtheorem{lemma}[theorem]{Lemma}
\newtheorem{corollary}[theorem]{Corollary}
\newcommand{\indep}{\perp \!\!\! \perp}
\DeclareMathOperator*{\argmax}{arg\,max}
\DeclareMathOperator*{\argmin}{arg\,min}
\begin{document}
\title{Experimental Designs for Optimizing Last-Mile Delivery} 
\author{Nicholas Rios and Jie Xu \\
        George Mason University}
\date{}
\maketitle


\begin{abstract}
Companies like Amazon and UPS are heavily invested in last-mile delivery problems. Optimizing last-delivery operations not only creates tremendous cost savings for these companies but also generate broader societal and environmental benefits in terms of better delivery service and reduced air pollutants and greenhouse gas emissions. Last-mile delivery is readily formulated as the Travelling Salesman Problem (TSP), where a salesperson must visit several cities and return to the origin with the least cost. A solution to this problem is a Hamiltonian circuit in an undirected graph. Many methods exist for solving the TSP, but they often assume the travel costs are fixed. In practice, travel costs between delivery zones are random quantities, as they are subject to variation from traffic, weather, and other factors. Innovations such as truck-drone last-mile delivery creates even more uncertainties due to scarce data. A Bayesian $D-$optimal experimental design in conjunction with a regression model are proposed to estimate these unknown travel costs, and subsequently search for a highly efficient solution to the TSP. This framework can naturally be extended to incorporate the use of drones and any other emerging technology that has use in last-mile delivery. 

\vspace{2mm}

\noindent Keywords: Last-Mile Delivery, Travelling Salesman Problem, Experimental Design, Bayesian D-Optimality Criterion
\end{abstract} 

\section{Introduction}

Last mile delivery has become an indispensable part of modern economy and social fabric and its prevalence and the delivery volume is still growing explosively. Everyday, hundreds of millions of customers received goods and services through last mile delivery. The speed, cost, and reliability of last mile delivery have huge impact on the profit and supply chain efficiency of business entities ranging from juggernauts like UPS and Amazon to small and local business owners, who rely on last mile delivery to reach a broader customer base and more effectively compete with larger companies. Last mile delivery also contributes significantly to local job creations. Besides these benefits, last mile delivery also has significant environmental and climate impact because of emissions of air pollutants and green house gases. There is a growing need for sustainable, environmental friendly, and economically competitive last mile delivery solutions. 

In last mile delivery operations, delivery points are often grouped into several zones. Delivery can be done using a single truck that drives to each zone, and then completes local deliveries in the zone. More recently, companies such as UPS and Amazon have considered a paradigm where trucks deploy unmanned aerial vehicles (drones) to assist in the delivery of packages. Under this paradigm, once a truck arrives in a zone, instead of driving to delivery points in this zone, it deploys drones to deliver packages to all the delivery points in the zone. This will particularly be useful in rural areas, where drones can fly virtually unimpeded and deliver points are often widely dispersed. For example, \cite{park2018comparative} explained that drone delivery is highly efficient in countries with rural areas and low population density; in fact, they found that the reduction of pollution provided by drones was much stronger in rural areas. The medical industry has also found drones useful; \cite{cheskes2020improving} examined the effectiveness of delivering Automated External Defibrillators to victims of cardiac arrest in rural areas via a drone and ambulance. A framework for efficiently delivering test kits and medicine to patients in rural areas has been proposed by \cite{kim2017drone}. 

The problem of interest is, given a cluster of delivery zones, what is the optimal order for the truck to depart from a depot, visit each delivery zone, and return to its point of origin? One solution is to formulate the delivery route as a Hamiltonian circuit on a weighted undirected graph, where the nodes represent the depot and all delivery points. This problem was recently studied by \cite{chang2018optimal}, who used K-means clustering to group the delivery points, and then applied a Travelling Salesman Problem (TSP) solver to find an efficient route. Since the cost of delivery depends on distances between zones and distances between the truck and delivery points, the cost is inherently symmetric. Other variants of this problem have received recent attention. \cite{agatz2018optimization} studied the Traveling Salesman Problem with Drone (TSP-D), where a single truck and a single drone are used. \cite{freitas2022exact} employed a mixed integer programming framework to find efficient heuristic solutions to variants of the TSP-D problem. In practice, it is more time and cost efficient to operate multiple drones from a truck and thus we focus on the problem of finding an optimal route for a single truck that uses multiple drones. 

One key challenge in applying those approaches for real world truck-drone last mile delivery is TSP solvers only work well when there are accurate estimates of travelling times and costs within and between delivery zones. However, in this case, delivery costs and times are be subject to additional uncertainty from weather, birds, and possible location-specific complications for drone visit. To tackle these emerging uncertain factors, there is a great need for new methods to learn the delivery costs and times and optimize delivery zoning and routing in an effective, efficient, and jointly optimal way. 

 In this paper, we propose a Bayesian framework that uses experimental designs to select a small set of Hamiltonian circuits for testing in real time, estimates unknown travel times between delivery zones, and then searches for an efficient route. The cost and time for drone delivery within a zone and the travelling between delivery zones will be treated as random quantities with prior distributions reflecting how much information is available to the delivery company. For example, travel time estimates from the Google Maps API can be used; for examples, see \cite{cahyani2023traveling}, \cite{hameed2020multi}, and \cite{elsayed2020smart}.  These methodological developments will culminate in a readily implementable learning-enabled decision support tool to search for minimal cost delivery zoning and routing in the new truck-drone last mile delivery paradigm.

\section{Problem Formulation}
\label{sec:linmod}

Suppose that there are $m-1$ delivery zones of interest that are fixed in advance. Assume that a truck, which must visit all delivery zones exactly once, begins from a depot, which we denote as zone $m$, for convenience. In this problem, a delivery route is represented by a Hamiltonian circuit in a weighted, undirected graph with $m$ vertices. Denote a delivery route as $\textbf{a} = (a_1, a_2, \dots, a_m, a_1)$, where $(a_1, \dots, a_m)$ is a permutation of $(1,2,\dots,m)$. Let $\mathcal{A}^*$ be the set of all $\frac{(m-1)!}{2}$ distinct Hamiltonian circuits on a complete graph with $m$ vertices, which are invariant under reversals and cyclic permutations. For a delivery route $\textbf{a}$, let $\tau(\textbf{a})$ be its average cost, which includes the cost of returning to the depot. 

We wish to identify an optimal Hamiltonian circuit $\textbf{a} \in \mathcal{A}^*$, i.e. \begin{equation}
\label{eqn:optproblem}
    \textbf{a}^* = \argmin_{\textbf{a} \in \mathcal{A}^*} \tau(\textbf{a})
\end{equation} In many practical applications, $\tau(\textbf{a})$ is unknown, which means that travel costs must first be estimated before applying traditional methods. If the travel costs between different delivery zones were exactly known, then $\tau(\textbf{a})$ could be exactly computed for each Hamiltonian circuit $\textbf{a}$, and efficient (but not exactly optimal) solutions to this problem could be found by the nearest neighbor algorithm \citep{kizilatecs2013nearest} or Christofides algorithm \citep{an2015improving}. However, travel times need to be estimated in the presence of noise and uncertainty in deliveries under a limited experimental budget. In the general truck-drone problem, where one truck deploys multiple drones, additional uncertainty is introduced from other sources as well. For instance, the amount of time a truck has to spend in a single delivery zone depends on the longest delivery trip among all of the drone deliveries made to customers in that particular zone. The drone travel time includes the time it takes to travel to the customer and back to the truck; this is subject to noise depending on wind speeds, customer location relative to the truck, and size and weight of the package. In the proposed framework, experimental designs coupled with statistical models can be used to first estimate these unknown travel costs, and then state-of-the-art algorithms can be used to identify an optimal route.


\section{Proposed Methods}

To solve the optimization problem in (\ref{eqn:optproblem}) in the presence of noise, we propose a Bayesian learning-based optimization framework that integrates statistical models with optimization methods. Let $x_{jk}(\textbf{a}) = 1$ if the undirected edge $(j, k)$ is in the Hamiltonian circuit $\textbf{a}$ and $x_{jk}(\textbf{a}) = 0$ otherwise. Consider Model (\ref{eqn:edgemodel}).\begin{equation}
\label{eqn:edgemodel}
    y(\textbf{a}) = \sum_{j < k}\beta_{jk}x_{jk}(\textbf{a}) + \epsilon(\textbf{a}),  \quad 
    \beta_{jk} \sim N(\mu_{jk}, \sigma_{jk}^2) \indep \epsilon(\textbf{a}) \sim N(0,\sigma^2)
\end{equation} In model (\ref{eqn:edgemodel}),  $\beta_{jk}$ is the stochastic travel cost for the edge $(j,k)$ in the Hamiltonian circuit $\textbf{a}$, $y(\textbf{a})$ is the observed total travel cost for circuit $\textbf{a}$, and $\epsilon(\textbf{a})$ is a noise term for each circuit. We assume that the edge costs are mutually independent and follow a Normal distribution with mean $\mu_{jk}$ and variance $\sigma^2_{jk}$, and that the edge costs are independent of the overall error terms $\epsilon(\textbf{a})$. 

Model (\ref{eqn:edgemodel}) is flexible in the sense that it allows prior information on travel costs to be used in several ways. The means $\mu$ represent our existing (prior) knowledge of the average costs for traveling from delivery zone $j$ to delivery zone $k$. These can simply be distances provided by map routing software, or they can be historical results from existing data. In the case where little prior knowledge of distances is provided, then the entries of $\mu$ can be zero. Similarly, the variances $\sigma_{jk}^2$ represent any prior knowledge of the variation in travel costs corresponding to the edge $(j,k)$. As an example, one method for incorporating traffic information into the prior knowledge is to set $\sigma_{jk}^2 = \lambda_1$ for areas with low traffic fluctuations and $\sigma_{jk}^2 = \lambda_2$ for areas with high traffic fluctuations, where $\lambda_2 > \lambda_1 > 0$. 

Model (\ref{eqn:edgemodel}) is a Bayesian model. The stochastic travel costs $\beta = [\beta_{12}, \dots, \beta_{m-1,m}]^T$ follow a prior distribution $N(\mu, R^{-1})$, where $\mu = [\mu_{12}, \dots, \mu_{m-1,m}]^T$ and $R^{-1}$ is a diagonal matrix with diagonal entries $\sigma^2_{12}, \dots, \sigma^2_{m-1,m}$. In Bayesian analysis, we collect data to update the prior distribution of $\beta$ to get a posterior distribution of $\beta$ given the observed data. When collecting data, a design is required, which is a subset of the Hamiltonian circuits. For a particular design (subset) $D$ of $n$ Hamiltonian circuits, let $X$ be the $n \times {m \choose 2}$ matrix of coded variables $x_{jk}(\textbf{a})$, and let $Y$ be the $n \times 1$ column vector of total costs for each corresponding circuit.  Then, according to standard Bayesian theory, the posterior estimate of $\beta \mid X, Y$ is given by \cite{jones2008bayesian} \begin{align}
    \label{eqn:paramest}
    \hat{\beta} = (X^TX + R)^{-1}(X^TY + R\mu)
\end{align}  It is also known that the distribution of $\beta$ given the data $X, Y$ is Normal with variance proportional to $(X^TX + R)^{-1}$, so it is possible to quantify the uncertainty of the travel costs.

\subsection{Finding Efficient Designs}
Because emerging last-mile delivery solutions such as truck-drone delivery lacks historical data, estimating travel costs would require experimenting with multiple delivery routes. Since there are $(m-1)!/2$ possible  distinct Hamiltonian circuits, it is desirable to identify an efficient design $D^*$ of Hamiltonian circuits to conduct experiments and estimate travel costs. Let $\mathcal{D}$ be the set of all possible subsets of $\mathcal{A}^*$.  Then, an optimal design $D^*$ is defined as \begin{align*}
    D^* = \argmax_{D \in \mathcal{D}} \phi(D)
\end{align*} for some optimality criterion $\phi$. We focus on the Bayesian $D-$optimality criterion \citep{chaloner1982optimal} so that we can minimize the uncertainty in the cost estimates after updating them based on prior information; in turn, this would give us more power for hypothesis testing. This framework also makes the design more robust to bias that would occur from assuming an incorrect model \citep{dumouchel1994simple}. In particular, this criterion maximizes \begin{align}
    \phi(D) = |X_D^TX_D + R|
    \label{eqn:phiD}
\end{align} where $X_D$ is the $X-$matrix obtained from applying the coding in Model (\ref{eqn:edgemodel}) to a subset $D$, which consists of exactly $n$ Hamiltonian circuits. A subset $D$ that maximizes the Bayesian $D-$optimality criterion would minimize the generalized variance of the estimated travel costs (\ref{eqn:paramest}). Minimizing the generalized variance increases the power of statistical tests for detecting significant travel costs. 

\begin{corollary}
\label{cor:fulloptimal}
Let $D_f$ be the set of all $(m-1)!/2$ Hamiltonian circuits. Under model (\ref{eqn:edgemodel}), $D_f$ is $\phi-$optimal for any $\phi$ that is concave and permutation invariant.
\end{corollary}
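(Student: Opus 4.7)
The plan is to leverage the natural action of the symmetric group $S_m$ on vertex labels, which acts transitively on $\mathcal{A}^*$ (any Hamiltonian circuit can be relabeled into any other). For any $\sigma\in S_m$ and any design $D\in\mathcal{D}$, the relabeled design $\sigma(D)$ is also in $\mathcal{D}$, and its information matrix equals $P_\sigma (X_D^T X_D) P_\sigma^T$, where $P_\sigma$ is the induced permutation on the $\binom{m}{2}$ edge coordinates. By permutation invariance of $\phi$ we then have $\phi(\sigma(D))=\phi(D)$ for every $\sigma$. Throughout the argument I take $R$ to be invariant under this action (e.g., $R=\lambda I$, the natural ``uninformative'' choice), noting that the general case follows by symmetrizing $R$ in the same way.

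Next I would apply Jensen's inequality. Concavity of $\phi$ gives
\begin{equation*}
\phi\!\left(\tfrac{1}{m!}\sum_{\sigma\in S_m}\bigl(X_{\sigma(D)}^T X_{\sigma(D)} + R\bigr)\right) \;\geq\; \tfrac{1}{m!}\sum_{\sigma\in S_m}\phi\bigl(X_{\sigma(D)}^T X_{\sigma(D)} + R\bigr) \;=\; \phi(D),
\end{equation*}
where the equality uses permutation invariance. The crucial step is to identify the averaged information matrix: because $S_m$ acts transitively on $\mathcal{A}^*$ with stabilizer of order $2m$ at every circuit (the dihedral group of cyclic shifts and reversals), a short orbit-counting argument yields
\begin{equation*}
\tfrac{1}{m!}\sum_{\sigma\in S_m} X_{\sigma(D)}^T X_{\sigma(D)} \;=\; \tfrac{|D|}{|D_f|}\,X_{D_f}^T X_{D_f}.
\end{equation*}
Writing $M_f := X_{D_f}^T X_{D_f}$, the Jensen bound becomes $\phi(D)\leq \phi\bigl(\tfrac{|D|}{|D_f|}M_f+R\bigr)$.

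To finish, I would compare $\phi\bigl(\tfrac{|D|}{|D_f|}M_f+R\bigr)$ with $\phi(D_f)=\phi(M_f+R)$. Since $|D|\leq|D_f|$ and $M_f\succeq 0$, we have $\tfrac{|D|}{|D_f|}M_f+R \preceq M_f+R$ in the Loewner order, and invoking the standard monotonicity of the Bayesian $D$-, $A$-, and $E$-criteria along this order yields $\phi(D)\leq\phi(D_f)$. The main obstacle is precisely this last monotonicity step: it is not formally implied by concavity plus permutation invariance alone, so either one reads monotonicity into the hypothesis (as holds for every information-based criterion typically used in practice), or, equivalently, the statement is interpreted as optimality among designs of a common run size, in which case the scaling factor in Jensen's inequality is trivially $1$ and the argument closes directly.
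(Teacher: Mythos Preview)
Your argument is essentially the same symmetrization strategy as the paper's: average the information matrix over the group action on vertex labels, use concavity for Jensen's inequality, and use permutation invariance to collapse the right-hand side to $\phi(D)$. The paper uses the subgroup of permutations fixing vertex~1 rather than all of $S_m$, but this is immaterial since both act transitively on $\mathcal{A}^*$.

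The one substantive difference is how the scaling issue you flag is handled. The paper works throughout with \emph{design measures}~$w$ on $\mathcal{A}^*$ and their normalized moment matrices $M(w)=\sum_{\textbf{a}} w(\textbf{a})\,x(\textbf{a})x(\textbf{a})^T$, rather than with raw $X_D^TX_D$. Because every design measure has total mass~1, the group average of $M(\pi w)$ over all $\pi$ is exactly $M_f$, with no $|D|/|D_f|$ prefactor; Jensen then gives $\phi(M_f)\geq\phi(M(w))$ directly, and no Loewner-monotonicity hypothesis is needed. This is precisely the ``common run size'' interpretation you propose at the end, carried out in the standard approximate-design framework. So your proof is correct once that normalization is adopted, and your caveat about monotonicity is well taken for the unnormalized formulation but simply does not arise in the paper's setup. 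Your remark that $R$ must itself be permutation invariant (e.g.\ $R=\lambda I$) is also apt; the paper's proof tacitly assumes this as well.
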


Corollary \ref{cor:fulloptimal} follows from Theorem 1 of \cite{OOADesign}. The only difference is that in the symmetric case, the optimality criterion must be permutation invariant instead of signed permutation invariant, which was required for the model examined in \cite{OOADesign}. In the case of Bayesian $D-$optimality, this corollary applies because the criterion is also symmetric and permutation invariant. Since Corollary 1 is true, there is an upper bound in optimality of subsets. 

\begin{theorem}
\label{thm:fullmomentmat}
Let $X_f$ be the $X-$ matrix corresponding to $D_f$. Then \begin{align*}
    X_f^TX_f\frac{2}{(m-1)!} = \frac{2}{(m-1)}I + \frac{2}{(m-1)(m-2)}Q
\end{align*} where $I$ is a $p \times p$ identity matrix for $p = {m \choose 2}$. $Q$ is also a $p \times p$ matrix with columns and rows indexed by the pairs $\{(1,2), (1,3), (1,4), \dots, (2,3), (2,4), \dots, (m-1,m)\}$ in lexicographically increasing order, with elements \begin{align*}
    Q(ij, k\ell) = \begin{cases}
        0 \quad \text{ if } ij = k\ell \\
        2 \quad \text{ if } i \neq k, \ell, \text{ and } j \neq k, \ell \\
        1 \quad \text{ otherwise }
    \end{cases}.
\end{align*}
\end{theorem}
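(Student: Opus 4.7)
The plan is to read off the entries of $X_f^T X_f$ combinatorially. Because $x_{ij}(\mathbf{a})\in\{0,1\}$, the $((ij),(k\ell))$ entry equals
\[
(X_f^T X_f)_{(ij),(k\ell)} \;=\; \sum_{\mathbf{a}\in\mathcal{A}^*} x_{ij}(\mathbf{a})\,x_{k\ell}(\mathbf{a}) \;=\; \#\{\mathbf{a}\in\mathcal{A}^*:\ (i,j),(k,\ell)\text{ both appear in }\mathbf{a}\}.
\]
So proving the theorem amounts to counting the Hamiltonian circuits of $K_m$ (up to reversal and cyclic shift) that pass through a specified pair of edges, and checking that the three cases of that count line up with the three cases in the definition of $Q$.

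I would handle the three cases separately. For $(ij)=(k\ell)$, a short double-counting argument suffices: each of the $(m-1)!/2$ circuits uses $m$ edges, and by the vertex-transitivity of $K_m$ each of the $\binom{m}{2}$ edges lies in the same number of circuits, forcing this number to be $(m-2)!$. For two edges sharing exactly one vertex, I would contract them into a single length-$2$ path, then count cyclic arrangements of the $m-2$ resulting objects; accounting for the $2$ orientations of the path and the factor of $1/2$ from reflection equivalence of circuits yields $(m-3)!$. For two disjoint edges, the same ``contract-and-count'' argument applied to two independent $2$-blocks gives $2(m-3)!$.

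Multiplying each of the three counts by the normalizer $\tfrac{2}{(m-1)!}$ produces $\tfrac{2}{m-1}$, $\tfrac{2}{(m-1)(m-2)}$, and $\tfrac{4}{(m-1)(m-2)}$ respectively, which match the diagonal entries $\tfrac{2}{m-1}$ of the first summand and the two nonzero entry types $Q=1$ and $Q=2$ of the second summand in the claimed decomposition $\tfrac{2}{m-1}I + \tfrac{2}{(m-1)(m-2)}Q$. The off-diagonal structure of $Q$ (zero on the diagonal, $1$ for overlap, $2$ for disjointness) thus emerges automatically from the case split.

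The main subtlety I expect is the bookkeeping of symmetry factors: the factor of $1/2$ inherited from identifying a circuit with its reversal, combined with the $2$-fold internal orientation of each glued edge or glued length-$2$ path. Because these cancel differently in the three cases, I would sanity-check all three counts on the small case $m=4$ (where $|\mathcal{A}^*|=3$ and each count can be enumerated by hand) before committing to the general argument.
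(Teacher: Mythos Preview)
Your proposal is correct and follows essentially the same approach as the paper: both proofs reduce the computation of $X_f^T X_f$ to counting, for each of the three cases (same edge, overlapping edges, disjoint edges), how many Hamiltonian circuits in $\mathcal{A}^*$ contain the specified edge(s), obtaining $(m-2)!$, $(m-3)!$, and $2(m-3)!$ respectively, and then normalizing. The only cosmetic difference is in the counting mechanics---the paper argues each case by fixing the edge(s) and permuting the remaining vertices, whereas you use a double-counting/symmetry argument for the diagonal and a contract-and-count argument for the off-diagonal cases---but the structure and substance are the same.
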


Based on Theorem \ref{thm:fullmomentmat}, we can quickly evaluate the Bayesian $D-$optimality criterion for the set of all Hamiltonian circuits, without having to explicitly list all $(m-1)!/2$ circuits. This is because the matrices $I$ and $Q$ have simple closed forms, so it is easy to generate them using standard software and then take the determinant in (\ref{eqn:phiD}). Given the large cardinality of $D_f$, it is of interest to find optimal fractions of $D_f$ to use for experimentation and fitting model (\ref{eqn:edgemodel}). We first show that it is possible to recursively construct $\phi-$optimal half-fractional subsets for arbitrary $m$. This process is summarized in Algorithm \ref{Alg:getoptdesign}.

{\SetAlgoNoLine
\begin{algorithm}[H]
\label{Alg:getoptdesign}
  \textbf{Inputs:} Optimal fractional design for $m-1$ components $D_{m-1}$ with $B$ rows. \\
  \For{$b = 1,2,\dots,B $}{
    1. Let $\textbf{a}_b$ be the $b^{th}$ row of $D_{m-1}$. \\
    2. Form a matrix $C_b$ with $m-1$ rows, where the first row of $C_b$ is $\textbf{a}_b$, and each row is a single clockwise rotation of the previous row. \\
    3. Let $U_b = [\textbf{1}, C_b + 1]$ where $\textbf{1}$ is a vector of ones. \\
  }
  4. $D = [U_1^T, U_2^T, \dots, U_B^T]^T$ \\
  \Return{$D$}

 \caption{ Recursively Generate Optimal Fractional Design }
\end{algorithm}}

Algorithm \ref{Alg:getoptdesign} generates an optimal fractional design for $m$ components given an optimal fractional design for $m-1$ components. Theorem \ref{thm:recursivedesign} shows that Algorithm \ref{Alg:getoptdesign} will produce a $\phi-$optimal design for any optimality criterion $\phi$ that is concave and permutation invariant. The Bayesian $D-$optimality criterion satisfies these conditions, because these are satisfied for $D-$optimality, and the term $R/N$ is the same for all possible subsets of the full design. 

\begin{theorem}
\label{thm:recursivedesign}
Let $D_{m-1}$ be a $\phi-$optimal design for $m-1$ components. Then, if Algorithm \ref{Alg:getoptdesign} is used to construct a design $D_m$, then $D_m$ must also be a $\phi-$optimal design in $m$ components. 
\end{theorem}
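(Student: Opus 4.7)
The plan is to establish a structural recursion that expresses the moment matrix $M_m := X_m^T X_m$ of the algorithm's output as an explicit function of $M_{m-1} := X_{m-1}^T X_{m-1}$, and then use concavity and permutation invariance of $\phi$ to transfer optimality from level $m-1$ to level $m$. The base case is supplied by Corollary \ref{cor:fulloptimal}, which already gives $\phi$-optimality of the full design; the induction step is what Theorem \ref{thm:recursivedesign} requires.

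First I would unpack Algorithm \ref{Alg:getoptdesign} combinatorially. Each parent row $\textbf{a}_b$ of $D_{m-1}$ is a Hamiltonian cycle on $\{1, \dots, m-1\}$, and the $m-1$ rows of $U_b$ are precisely the $m-1$ cycles on $\{1, \dots, m\}$ obtained by inserting the new vertex $1$ at each of the $m-1$ slots of the shifted parent cycle $\textbf{a}_b + 1$ on $\{2, \dots, m\}$. Inserting vertex $1$ at a slot removes exactly one inner edge of the parent cycle and adds exactly two star edges incident to vertex $1$.

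Next I would partition the edges of the $m$-vertex graph into star edges $\{1, j\}$ and inner edges $\{j, k\}$ with $j, k \geq 2$, and compute $U_b^T U_b$ block by block through direct counting. Each inner edge of the parent cycle appears in $m-2$ of the $m-1$ rows of $U_b$, each pair of distinct parent edges appears together in $m-3$ rows, and any inner edge not in the parent cycle contributes zero. Each star edge $\{1, l\}$ appears in exactly $2$ rows of $U_b$, and a pair $\{1, l_1\}, \{1, l_2\}$ appears together in exactly one row precisely when $\{l_1, l_2\}$ is a parent edge. Cross-terms between an inner edge $e \in \textbf{a}_b$ and a star edge $\{1, l\}$ equal $2$ if $l$ is not an endpoint of $e$ and $1$ if it is. Summing over $b$, every block of $M_m$ becomes an explicit affine function of $M_{m-1}$, with coefficients determined by $m$, $|D_{m-1}|$, and the $2$-regularity of every Hamiltonian cycle. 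The matching identification of these blocks with the $(I,Q)$ structure of Theorem \ref{thm:fullmomentmat} gives the recursion its final form.

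Finally I would invoke the optimality principle used to prove Corollary \ref{cor:fulloptimal}. Because $\phi$ is concave and permutation invariant, averaging any candidate design on $m$ vertices of size $(m-1)|D_{m-1}|$ over the stabilizer of vertex $1$ in the symmetric group $S_m$ yields a design of $\phi$-value no smaller than the original whose moment matrix has exactly the block structure of $M_m$ above. Within this symmetric class, the previous step shows that the symmetrized moment matrix lies in the image of the map $M_{m-1} \mapsto M_m$, so $\phi(M_m)$ is maximized whenever $\phi(M_{m-1})$ is; since $D_{m-1}$ is $\phi$-optimal by hypothesis, $D_m$ attains the maximum of $\phi$ over designs of size $(m-1)|D_{m-1}|$ on $m$ vertices.

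The hardest step will be establishing that $\phi(M_m)$ is a non-decreasing function of $\phi(M_{m-1})$ along the image of the recursion. This is a statement about concave symmetric functions on a specific affine family of symmetric matrices and is the technical crux. A promising route is to verify that the affine map $M_{m-1} \mapsto M_m$ is Loewner-monotone (i.e., $M_{m-1} \preceq M_{m-1}'$ implies $M_m \preceq M_m'$), which would reduce to a Schur-complement positivity check on the block decomposition and, together with Loewner monotonicity of the Bayesian $D$-criterion and any concave permutation-invariant $\phi$, would deliver the desired conclusion.
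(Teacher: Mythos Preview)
Your combinatorial counting in the second paragraph is exactly what the paper does, and it is the heart of the proof. But you stop one step too early and then take a long detour. Once you have expressed every block of $M_m = X_m^TX_m$ as an affine function of the entries of $M_{m-1}$, the paper simply substitutes the \emph{known values} of those entries: by the induction hypothesis (together with Theorem~\ref{thm:fullmomentmat}), the normalized $M_{m-1}$ is not just ``$\phi$-optimal'' in some abstract sense---it coincides with the full-design moment matrix $\tfrac{2}{(m-2)!}X_f^TX_f$ on $m-1$ vertices, so the specific counts $(m-2)!/2$, $(m-3)!/2$, $(m-3)!$ are available. Plugging these into your block formulas gives, on the nose,
\[
\frac{4}{m!}X_m^TX_m \;=\; \frac{2}{m}I + \frac{2}{m(m-1)}Q,
\]
which is exactly the full-design moment matrix for $m$ vertices. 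Corollary~\ref{cor:fulloptimal} then finishes the proof immediately: any design whose normalized moment matrix equals $M_f$ is $\phi$-optimal.

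Because you do not carry out this substitution, you are left trying to prove a much harder and unnecessary statement: that the map $M_{m-1}\mapsto M_m$ is monotone for $\phi$ (your symmetrization plus Loewner step). You correctly flag this as the crux, but it is not needed, and the Loewner-monotonicity check you propose is not obviously true (the off-diagonal star/inner cross-blocks depend on $M_{m-1}$ in a way that does not reduce to a clean Schur-complement positivity). The paper sidesteps all of this: the induction is really on the property ``normalized moment matrix equals $M_f$,'' which is preserved by Algorithm~\ref{Alg:getoptdesign} and is established at the base case by direct verification. That is both simpler and sufficient.
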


To implement Theorem \ref{thm:recursivedesign}, one must first find an optimal design in $m$ components for small $m$, and then repeatedly apply Algorithm \ref{Alg:getoptdesign}. Fortunately, for small $m$, it is possible to use exhaustive search to find optimal design. An optimal design for $m = 5$ is provided in Table \ref{tab:m5opt}.

\begin{table}[!ht]
\centering
\caption{A $\phi-$optimal subset in $m = 5$ components. }
\begin{tabular}{lllll}
\hline \hline 
1 & 2 & 4 & 3 & 5 \\
1 & 2 & 3 & 5 & 4 \\
1 & 2 & 5 & 4 & 3 \\
1 & 5 & 2 & 3 & 4 \\
1 & 3 & 2 & 4 & 5 \\
1 & 3 & 5 & 2 & 4 \\
\hline \hline 
\end{tabular}
\label{tab:m5opt}
\end{table}

\subsection{Algorithmic Construction of Highly Efficient Designs}


In the event that budgets are too heavily constrained, theoretical results may not provide subsets of circuits that are affordable. In such cases, it is possible to use heuristic search algorithms to find a subset that is both affordable and highly efficient in terms of the Bayesian $D-$optimality criteria. Traditionally, these problems have been solved using greedy exchange algorithms, such as the Fedorov algorithm \citep{federovtheory} or the coordinate exchange algorithm \citep{meyer1995coordinate}. These algorithms are ``greedy'' in the sense that at each iteration, they will deterministically choose a new solution that strictly improves the optimality criterion. In general, these exchange algorithms typically converge to a locally (but not necessarily globally) optimal solution \citep{palhazi2016optimal}. The issue with using the Fedorov algorithm is that as $m$ increases, the number of possible Hamiltonian circuits increases dramatically. The Fedorov algorithm requires enumerating all $(m-1)!/2$ possible circuits, which becomes inefficient. 

A great example of a greedy algorithm that can be adapted to this problem is the bubble sorting algorithm proposed by \cite{lin2019order}. This algorithm does not require enumerating all $(m-1)!/2$ possible Hamiltonian circuits. The main idea behind this algorithm is that finding an optimal design can be viewed as a sorting problem, where each row needs to be ``sorted'' in an order that maximizes the optimality criterion. This algorithm takes an initial design with $n$ rows, and (as per the bubble sort algorithm) exchanges adjacent elements in each row of the design if they increase the Bayesian $D-$optimality criterion. More details are given in Algorithm \ref{alg:BB} below.

\begin{algorithm}[H]
\label{alg:BB}
  \SetAlgoLined
  \textbf{Input}: Initial subset $D^{(0)}$ with $n$ rows, and the maximum number of iterations $iter_{max}$.\\
  1. Calculate $d_0 = \phi(D^{(0)})$. \\
  \For{$iter = 1,2,\dots,iter_{max}$}{
     \For{$r = 1,2,\dots,n$}{
        2. \texttt{Still\_Sorting} = True. \\
        \While{\texttt{Still\_Sorting}}{
        \For{$j = i,\dots,m-1$}{
            3. Copy $D^{(1)} = D^{(0)}$. In the $r^{th}$ row of $D^{(1)}$, exchange the elements in positions $j$ and $j+1$. \\
            4. $d_1 = \phi(D^{(1)})$.\\
             \eIf{  $d_1 > d_0$  }{
         5. $D^{(0)} = D^{(1)}$, $d_0 = d_1$.  \\
    }{
    6. \texttt{Still\_Sorting} = False. 
    }
            
        } }
    }
    7. Store the best design so far as $D^*=D^{(0)}$. \\
     }

 \Return{$D^*$}
 \caption{Bubble Sorting Algorithm}
\end{algorithm}

As inputs, Algorithm \ref{alg:BB} takes an initial subset of $n$ rows and a maximum number of iterations. In each iteration, the bubble sorting algorithm examines each row of the current design. For each row, it considers exchanging the positions of adjacent components; if this exchange improves the Bayesian $D-$optimality criterion, the exchange is performed. Once all rows have been examined, the procedure repeats until the maximum number of iterations has been reached. Since this is a greedy local optimization procedure, it is fast, but it is not guaranteed to find the globally best design. It is therefore recommended to run this algorithm for several initial designs and select the best design.

For large spaces, it is helpful to have an algorithm that spends more time exploring the space of possible subsets. An initial example of this is a simulated annealing algorithm \citep{bertsimas1993simulated} to search for highly efficient fractional subsets of arbitrary run size $n \geq 1+{m \choose 2}$. In general, Simulated Annealing works by proposing solutions that are close to the current solution, and accepting them with a probability that depends on the change in optimality between the current and proposed solutions \citep{aarts2005simulated, nikolaev2010simulated, delahaye2019simulated}. Since its introduction, Simulated Annealing  received a lot of attention in the optimization literature, with applications in neural network optimization \citep{he2022bearing}, electronic waste disassembly \citep{wang2021genetic}, feature selection in medical datasets \citep{abdel2021hybrid}, vehicle routing \citep{ilhan2021improved} and job shop scheduling \citep{fontes2023hybrid}. The pseudo-code for Simulated Annealing is presented in Algorithm \ref{Alg:SA}.

\begin{algorithm}[H]
\label{Alg:SA}
  \SetAlgoLined
  \textbf{Inputs}: Initial subset $D^{(0)}$ with $n$ rows and model expansion $X^{(0)}$, number of iterations $n_i$. \\
  1. Let $M^{(0)} = (X^{(0)})^TX^{(0)}$. Let $t = 0$.
  2. Store the current $\phi-$efficiency of $M^{(0)}$. \\
  \For{$t = 0,1,2,\dots,n_i$}{
    3. Randomly select a row $i$ and column $j$ of $D^{(t)}$ over $i = 1,\dots,n$ and $j = 2,\dots,m-1$.  \\
    4. Copy $D^{(t+1)} = D^{(t)}$, and exchange elements in positions $j$ and $j+1$ in the $i^{th}$ row of $D^{(t+1)}$.  \\
    5. Find $M^{(t+1)}$ and update the $\phi-$efficiency of $M^{(t+1)}$. Simulate $U \sim U[0,1]$. \\
    \If{  $U < \exp{ \Big( ( - (\phi(M^{(t+1)}) - \phi(M^{(t)})  )/(1/\log(t+1)) \Big)}$  }{
        6. $D^{(t)} = D^{(t+1)}$, $M^{(t)} = M^{(t+1)}$. \\
        7. Store the best subset so far as $D^*$.
    }
  }
  
 \Return{$D^*$}
 \caption{Find $\phi-$Efficient Subset of Hamiltonian Circuits}
\end{algorithm}

As inputs, Algorithm \ref{Alg:SA} takes an initial subset $D^{(0)}$ which must be of the target run size $n$. In Steps 1 and 2, the $\phi-$efficiency of the initial subset is found and stored. During each iteration of Algorithm \ref{Alg:SA}, a neighboring subset is generated by selecting a random row from the current subset (Step 3), and then randomly exchanging two adjacent elements in columns $2$ to $m$ (Step 4). This ensures that the proposed subset is similar to the current subset, and that the new row still corresponds to a Hamiltonian circuit. In Step 5, the $\phi-$efficiency of the proposed subset is found. In Steps 6-7, the proposed subset is accepted with probability $\exp{ \Big( ( - (\phi(M^{(t+1)}) - \phi(M^{(t)})  )/(1/\log(t+1)) \Big)}$. This way, if a proposed subset has slightly lower $\phi-$efficiency than the current subset, it will have a higher chance of being accepted earlier on in the algorithm (when $t$ is lower). This prevents the algorithm for becoming prematurely ``stuck'' at local optimal subsets. As $t$ increases, the probability of choosing subsets with lower $\phi-$efficiency decreases. As a metaheuristic algorithm, simulated annealing is not guaranteed to find the optimal subset in all cases. \\

\subsection{Identification of Optimal Order}

Once a model of the form (\ref{eqn:edgemodel}) is fit to the experimental data, it remains to use it to predict the optimal order. The minimum-cost Hamiltonian circuit could be found by using the model to predict the total cost for all possible Hamiltonian circuits in the graph, and then selecting the circuit that minimizes the predicted total cost, making it so not every single route needs to be run in realtime in practice. However, for very large $m$, it is still inefficient to find all $(m-1)!/2$ predicted total costs. 

An alternative solution is to use a heuristic algorithm on the regression coefficients. one example of this is provided in Algorithm \ref{Alg:getoptpath} below, which uses a Nearest Neighbor (NN) heuristic to search for low-cost delivery routes. In general, it is possible to run any commonly heuristic with the estimated edge costs obtained from the regression coefficients; see Section 4.2 for an example.

\begin{algorithm}[H]
  \textbf{Inputs:} Estimated Vector of Coefficients $\hat{\beta}$ from Model (\ref{eqn:edgemodel}) \\
  1. Initialize a path $L = [m]$, and let $V = \{1,\dots,m-1\} $. \\
  2. Select a random vertex in $V$, and append it to $L$. \\
  \While{ $V \neq \emptyset $}{
    3. Let $v$ be the last element added to $L$. \\
    4. Let $u = \argmin_{u \in V} \hat{\beta}_{uv} $ (or $\hat{\beta}_{vu}$, if $v < u$). \\
    5. Remove $u$ from $V$, and append $u$ to $L$.
  
  }

  \Return{$\hat{\textbf{a}} = L$}

 \caption{ Search for Optimal Hamiltonian circuits }
 \label{Alg:getoptpath}
\end{algorithm}

Algorithm \ref{Alg:getoptpath} uses nearest neighbors to search for an efficient Hamiltonian circuit. In Step 1, a path is initialized with the starting vertex as $m$. In Step 2, a random vertex in $\{1,2,\dots,m-1\}$ is appended to the path $L$. In Steps 3 to 5, the edge incident to the vertex added last that has the lowest cost (measured by the regression coefficients) is appended to the path. This process repeats until all vertices have been visited. The vertex $m$ is used as the initial point because Model (\ref{eqn:edgemodel}) is constrained so that the edge cost for all edges incident to vertex $m$ is zero. For this reason, it is helpful to execute Algorithm \ref{Alg:getoptpath} with several different random vertices initially chosen after vertex $m$. 

Algorithm \ref{Alg:getoptpath} is a heuristic algorithm and, like most heuristic algorithms for the TSP, only provides approximate or highly efficient solutions, and is not guaranteed to find the optimal solution in every case. However, it is possible to provide an upper bound on the efficiency of the solution provided by Algorithm \ref{Alg:getoptpath}. This is addressed in Theorem \ref{thm:minproblimit}. 

\begin{theorem}
\label{thm:minproblimit}
Let $\textbf{a}^* = \argmin_{\textbf{a} \in D_f} \tau(\textbf{a})$, where $\tau(\textbf{a}) = \sum_{jk} \beta_{jk}^*x_{jk}(\textbf{a})$ and $\beta_{jk}^*$ is the true travel cost from delivery zone $j$ to $k$. Let the output of Algorithm \ref{Alg:getoptpath} be $\hat{\textbf{a}}$. Let $\hat{y}(\textbf{a})$ be the total travel cost for circuit $\textbf{a}$ obtained from fitting Model (\ref{eqn:edgemodel}) to the data $(\textbf{a}_i, y_i), i = 1,\dots,n$. Assume that \begin{enumerate}
    \item[(a)] \textbf{Consistency.} $X^TX/n \rightarrow Q$ for some positive definite $Q$ as $n \rightarrow \infty$, and $(1/n)X^T\boldsymbol\epsilon \rightarrow \textbf{0}$ as $n \rightarrow \infty$.
    \item[(b)] \textbf{Triangle Inequality.} For any three vertices $i,j,k$, it must be true that $\beta_{ij}^* + \beta_{ik}^* \geq \beta_{jk}^*$.
\end{enumerate}

Then, \begin{align*}
    P(\hat{y}(\hat{\textbf{a}}) \leq C\tau(\textbf{a}^*)) \rightarrow 1,
\end{align*} as $n \rightarrow \infty$, where $C = (0.5\lceil\log_2(m)\rceil + 0.5)$. 
\end{theorem}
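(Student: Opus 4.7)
The plan is to combine two separate ingredients: consistency of the Bayesian posterior-mean estimator $\hat{\beta}$ for the true edge-cost vector $\beta^*$, and the classical Rosenkrantz--Stearns--Lewis (RSL) worst-case bound, which states that for any $m$-vertex metric TSP instance the nearest-neighbor heuristic returns a tour of cost at most $(0.5\lceil \log_2 m\rceil + 0.5)$ times the optimum. The constant $C$ in the statement is precisely the RSL factor, and hypothesis (b) is exactly the triangle inequality that RSL requires.

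First I would establish consistency of $\hat{\beta}$. Writing $Y = X\beta^* + \boldsymbol\epsilon$, equation (\ref{eqn:paramest}) expands as
\begin{equation*}
\hat{\beta} = \bigl(X^TX/n + R/n\bigr)^{-1}\bigl((X^TX/n)\beta^* + X^T\boldsymbol\epsilon/n + R\mu/n\bigr).
\end{equation*}
Hypothesis (a) gives $X^TX/n \to Q$ and $X^T\boldsymbol\epsilon/n \to 0$ in probability; the prior-driven terms $R/n$ and $R\mu/n$ vanish since $R$ and $\mu$ are fixed. Slutsky's lemma then delivers $\hat{\beta} \to Q^{-1}Q\beta^* = \beta^*$. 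In particular, for each fixed tour $\textbf{a}$, $\hat{y}(\textbf{a}) = \sum_{j<k}\hat{\beta}_{jk}x_{jk}(\textbf{a}) \to \tau(\textbf{a})$ in probability, and since $\mathcal{A}^*$ is finite, this convergence is uniform over $\textbf{a} \in \mathcal{A}^*$.

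Next I would compare the output of Algorithm \ref{Alg:getoptpath} run on $\hat{\beta}$ to the output run on $\beta^*$. By the RSL bound applied with the true costs, any nearest-neighbor tour $\tilde{\textbf{a}}$ starting from vertex $m$ satisfies $\tau(\tilde{\textbf{a}}) \leq C\tau(\textbf{a}^*)$. Because $\hat{\beta} \to \beta^*$, at each selection step of the algorithm the argmin of $\hat{\beta}_{uv}$ over the remaining vertices agrees with the argmin of $\beta^*_{uv}$ with probability tending to one, outside a measure-zero tie set of $\beta^*$. Hence $\hat{\textbf{a}}$ coincides with an NN tour under $\beta^*$ with probability tending to one, and on that event $\tau(\hat{\textbf{a}}) \leq C\tau(\textbf{a}^*)$. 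Combined with $\hat{y}(\hat{\textbf{a}}) - \tau(\hat{\textbf{a}}) \to 0$ in probability (from the uniform convergence above), this gives $\hat{y}(\hat{\textbf{a}}) \leq C\tau(\textbf{a}^*) + o_p(1)$.

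The main obstacle I anticipate is the borderline case in which the RSL bound is attained with equality: for generic $\beta^*$ there is strict slack $\delta > 0$ in $\tau(\tilde{\textbf{a}}) \leq C\tau(\textbf{a}^*)$ that absorbs the $o_p(1)$ fluctuation, so $P(\hat{y}(\hat{\textbf{a}}) \leq C\tau(\textbf{a}^*)) \to 1$ follows immediately. Tight-equality instances form a measure-zero set in $\beta^*$-space and can either be excluded by a mild genericity assumption on the true costs, or absorbed by replacing the deterministic RSL step by the self-referential bound $\hat{y}(\hat{\textbf{a}}) \leq C\min_{\textbf{a}}\hat{y}(\textbf{a}) \leq C\hat{y}(\textbf{a}^*)$ (valid once $\hat{\beta}$ inherits the triangle inequality, which happens with probability tending to one) and then observing that $\hat{y}(\textbf{a}^*) \to \tau(\textbf{a}^*)$ is symmetric, so at least half of the probability mass of the one-sided deviation lies on the desired side; this is the only delicate piece of the argument.
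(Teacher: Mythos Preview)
Your proposal follows essentially the same route as the paper's own proof: both expand the posterior mean to show $\hat\beta\to_p\beta^*$ via assumption~(a), argue that for large $n$ the nearest-neighbor output $\hat{\textbf{a}}$ coincides with a nearest-neighbor tour on the true costs so that the Rosenkrantz--Stearns--Lewis bound under assumption~(b) yields $\tau(\hat{\textbf{a}})\le C\tau(\textbf{a}^*)$, and then combine this with $\hat y(\hat{\textbf{a}})\to_p\tau(\hat{\textbf{a}})$, splitting into the strict-inequality and equality cases. The paper handles the strict case by an explicit $\epsilon$-squeeze and the equality case by simply noting $\hat y(\hat{\textbf{a}})\to_p C\tau(\textbf{a}^*)$ and declaring the result, so your flagging of the equality case as the delicate point is on target; your ``half the probability mass'' remark, however, does not yield probability tending to~$1$ and should be dropped in favor of the genericity observation you already made.
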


Theorem \ref{thm:minproblimit} shows that, as the sample size increases, the solution produced by Algorithm \ref{Alg:getoptpath} will likely be at most $(0.5\lceil\log_2(m)\rceil + 0.5)\tau(\textbf{a}^*)$, where $\tau(\textbf{a}^*)$ is the cost of the true optimal Hamiltonian circuit. Assumption (a) of Theorem \ref{thm:minproblimit} is required for convergence of the estimator in Equation (\ref{eqn:paramest}). Assumption (b) is required to guarantee the upper bound on the nearest neighbor approximation. Since costs in the symmetric TSP are inherently based on distances or travel times, it is reasonable to require Assumption (b) in many cases.

\section{Results}

In this section, simulations are used to evaluate the efficacy of the algorithmic design construction methods proposed in Section 3, in addition to the quality of solutions found for the TSP in a last mile-delivery setting. 

\subsection{Relative Efficiency Comparison}

It was of interest to compare the performance of Algorithm \ref{alg:BB} and Algorithm \ref{Alg:SA}. The metric for comparison was the relative $D-$efficiency. For a design $D$ with $n$ rows, the relative $D-$efficiency to the full design $D_f$ (with all $(m-1)!/2$ possible Hamiltonian circuits)  is \begin{align}
    \Bigg(\frac{ |X_D^TX_D/n + R| }{ |X_f^TX_f/N + R|}\Bigg)^{1/{m \choose 2}},
\end{align} where $X_D$ and $X_f$ are the model matrix expansions of $D$ and $D_f$, respectively.  A relative $D-$efficiency of 1 indicates that the smaller design $D$ has the same Bayesian $D-$efficiency as the full design, and is therefore optimal. Since the full design is optimal, the relative efficiency will always be between 0 and 1. 

Both algorithms were used to search for optimal designs for $m = 6,8,10$ and $n = {m \choose 2} + 1, 2{m \choose 2} + 1$, and $3{m \choose 2} + 1$. These sample sizes were chosen to illustrate the effect of increasing the run size, while also accounting for the fact that if the number of delivery zones $m$ increases, more experimental runs are needed to provide stable estimates of the model parameters. We assumed that $R = \tau^2I$ with $\tau^2 = 0.01$. Since both algorithms are sensitive to the initial choice of design, each algorithm was executed for 10 different initial designs, and the best design in each iteration was selected. For each combination of $m$ and $n$, both algorithms were executed 100 times.

\begin{table}[!ht]
\centering
\caption{Comparison of Median Relative $D-$efficiencies}
\begin{tabular}{ccccc}
\hline \hline 
$m$  & $n$   & $n/N$   & Simulated Annealing & Bubble Sort \\
\hline 
6  & 16  & 0.263 & 0.961               & 0.963       \\
   & 31  & 0.516 & 0.978               & 0.996       \\
   & 46  & 0.767 & 0.984               & 0.997       \\
8  & 29  & 0.011 & 0.880               & 0.919       \\
   & 57  & 0.023 & 0.931               & 0.981       \\
   & 85  & 0.034 & 0.952               & 0.992       \\
10 & 46  & $<$0.001 & 0.807               & 0.875       \\
   & 91  & $<$0.001 & 0.898               & 0.967       \\
   & 136 & $<$0.001 & 0.929               & 0.985      \\
   \hline \hline 
\end{tabular}
\label{tab:reldeffcomp}
\end{table}

Table \ref{tab:reldeffcomp} shows the median relative $D-$efficiencies of 100 designs found by the Simulated Annealing and Bubble Sort algorithms for $m = 6,8,10$. As expected, if the sample size $n$ increases, the median relative $D-$efficiency of both methods increases. The third column of Table \ref{tab:reldeffcomp} shows the ratio of the sample size $n$ to the number of possible Hamiltonian circuits $N = (m-1)!/2$. In all cases, it is clear that Bubble Sort achieves a higher median relative $D-$efficiency than Simulated Annealing, and it is the preferred algorithm. Even when fewer than $0.1\%$ of the total available Hamiltonian circuits were included in the design, the Bubble Sort algorithm is able to identify designs with a relative $D-$efficiency of 98\%. 

\subsection{Efficiency of Estimated Optimal Routes}

To demonstrate the efficacy of the proposed methods, we compared the true route costs of the solutions identified by Algorithm \ref{Alg:getoptpath} using three different heuristics to solve the TSP. These three heuristics are nearest neighbors (NN), arbitrary insertion (ARB), and Concorde's Chained Link-Kernighan (LK) heruristic \citep{applegate2003chained}. These three algorithms were implemented in R using the package \texttt{TSP} \citep{hahsler2014tsp}. The goal of this simulation was to examine the distribution of route costs (for all three heuristics) using three different methods of estimating the edge costs. The first estimation method is a naive approach that only uses given prior distances between delivery zones. The second estimation method is a frequentist method, where ridge regression was used to estimate the coefficients in a model of the form \begin{align}
    \label{eqn:ridgemodel}
    y(\textbf{a}) = \sum_{j < k}\beta_{jk}x_{jk}(\textbf{a}) + \epsilon(\textbf{a}).
\end{align} Notice that Model (\ref{eqn:ridgemodel}) is a frequentist regression model, as it does not assume a prior distribution on the parameters $\beta_{jk}$. Ridge regression is used because this model is over-parameterized. For any Hamiltonian circuit $\textbf{a}$ and for each vertex $i \in \{1,\dots,m-1\}$, $\sum_{jk \ni i}x_{jk}(\textbf{a}) = 2$; i.e., each vertex is always adjacent to two other vertices. It follows that, without any regularization or identifiability constraints, the resulting model matrix $X$ will not be full rank. Ridge regression was implemnted using the R package \texttt{glmnet} \citep{glmnet2010}, with the penalty parameter $\lambda$ selected via 10-fold cross validation.  The third method is the proposed Bayesian model (\ref{eqn:edgemodel}), which assumes $\beta \sim N(\mu, R)$ with $R = \tau^2I$ for $\tau = 0.1$. 


In this simulation, we examined $m$ delivery locations for $m = 20$ that were randomly placed in the region $[0,1] \times [0,1]$. The distance matrix for edges between every pair of locations was found, and each undirected edge in the complete graph on $m$ vertices was weighted according to its distance. Note that the total cost of a Hamiltonian circuit included the return cost. For each value of $m$, experimental designs of size $n = 49, 96, 191,$ and $381$ were found using a Simulated Annealing algorithm with 10,000 iterations. These sample sizes correspond to $n = 0.25{m \choose 2} + 1, 0.5{m \choose 2} + 1, {m \choose 2} + 1$, and $2{m \choose 2}+1$, respectively. These were chosen to show the effect of increasing the sample size on the results. A sample size of $n = {m \choose 2} + 1$ would be the bare minimum needed to estimate the ${m \choose 2}$ main effects in a regression model without regularization or a prior distribution.  

100 iterations of the simulation were executed. During each iteration, random noise was added to the edge weights to simulate fluctuations in travel times. Two random noise scenarios were considered. In Scenario (a), edges with weights below the $25^{th}$ percentile of all edge weights were identified, and $N(0.5, 0.25^2)$ errors were added to these edge weights. In Scenario (b), $t_3(1/\mu_{jk})$ random errors were added to each edge, where $1/\mu_{jk}$ is a non-centrality parameter. In both scenarios, the random noise simulated increased travel times on congested shorter roads, which will potentially have an effect on the optimal route. Finally, $N(0, 0.1^2)$ white noise was added to all edges in both scenarios. The modified edge weights were used to simulate the travel time for each Hamiltonian circuit in the subset. The simulated data were used to fit Model (\ref{eqn:edgemodel}) and Model (\ref{eqn:ridgemodel}) to obtain Bayesian and frequentist estimates of the true edge costs, respectively. Finally, each TSP heuristic was used to search for an optimal route using the naive prior costs, the frequentist estimated costs, and the Bayesian estimated costs. The cost of these optimal routes were stored for comparison.

\begin{figure}
  \centering
  \subfigure[$n = 49$]{\includegraphics[scale = 0.6]{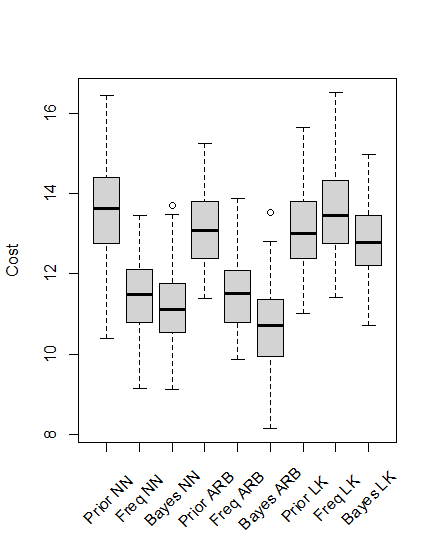}\label{fig:Scenarioam20n49}}
  \subfigure[$n = 96$]{\includegraphics[scale = 0.6]{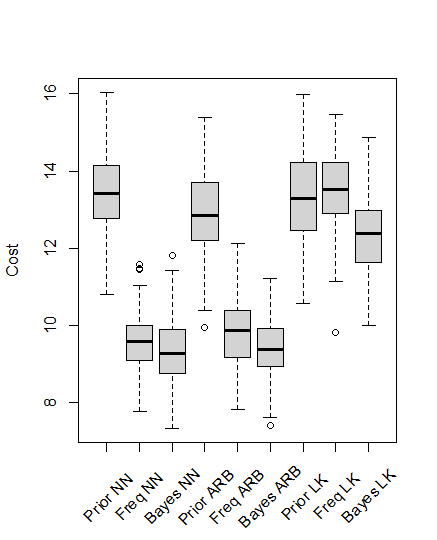}\label{fig:Scenarioam20n96}}
  \subfigure[$n = 191$]{\includegraphics[scale = 0.6]{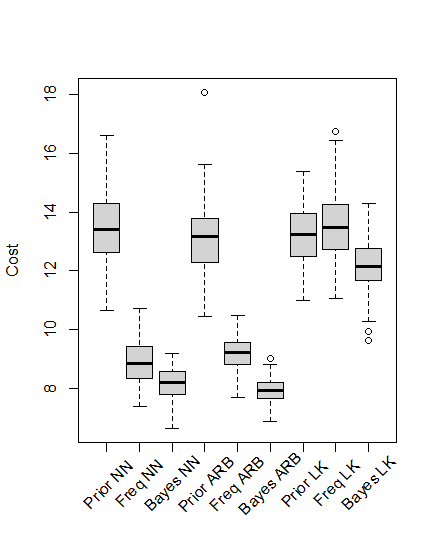}\label{fig:Scenarioam20n191}}
  \subfigure[$n = 381$]{\includegraphics[scale = 0.6]{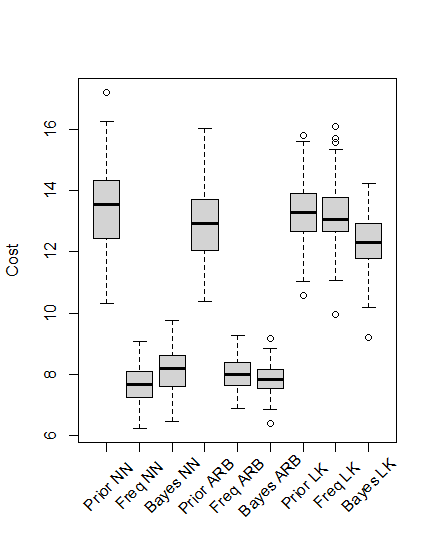}\label{fig:Scenarioam20n381}}
  \caption{Solution Cost Comparison, $m = 20$ Delivery Zones, Scenario (a)}
  \label{fig:CostScenarioa}
\end{figure}

Figure \ref{fig:CostScenarioa} shows the distributions of the true costs of the proposed solutions for each combination of the three heuristics (NN, ARB, and LK) and the three edge cost estimation methods (prior only, frequentist, and Bayes) under Scenario (a) for four different sample sizes. For the NN and ARB heuristics, it is always better to use either the frequentist or Bayesian methods instead of just the prior costs, since the Bayesian and frequentist methods have lower median costs for $n = 49, 96, 191,$ and $381$. For the LK heuristic, it is generally best to use the Bayesian method. The frequentist method only shows improvement over the prior costs when the sample size is very large, e.g., when $n = 381$. For each examined sample size, the LK heuristic had the worst performance in terms of median route cost for the Bayesian and frequentist methods. For smaller sample sizes (e.g. $n = 49, 96$), it is better to use a Bayesian estimate for the costs with the ARB or NN heuristic, as they have similar median costs that are lower than those produced by the other methods. For very large sample sizes, e.g. $n = 381$, the frequentist method with the NN heuristic had marginally better performance than the Bayesian method with ARB and NN. This is intuitive, as when $n$ becomes very large, the likelihood function provides more information relative to the prior distribution.

\begin{figure}
  \centering
  \subfigure[$n = 49$]{\includegraphics[scale = 0.6]{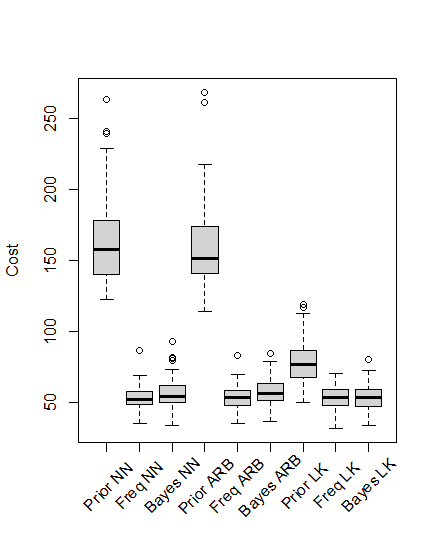}\label{fig:Scenariobm20n49}}
  \subfigure[$n = 96$]{\includegraphics[scale = 0.6]{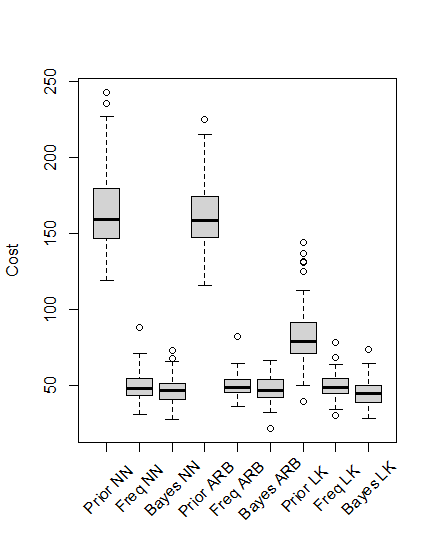}\label{fig:Scenariobm20n96}}
  \subfigure[$n = 191$]{\includegraphics[scale = 0.6]{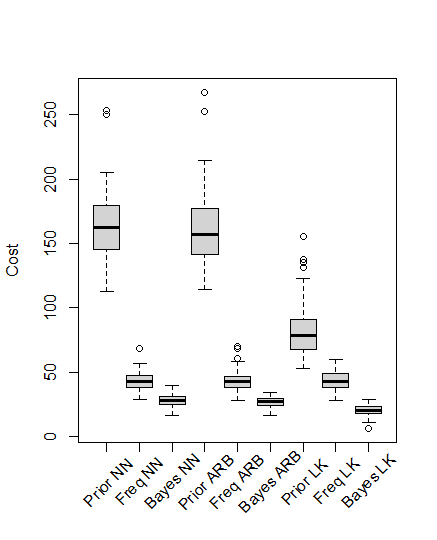}\label{fig:Scenariobm20n191}}
  \subfigure[$n = 381$]{\includegraphics[scale = 0.6]{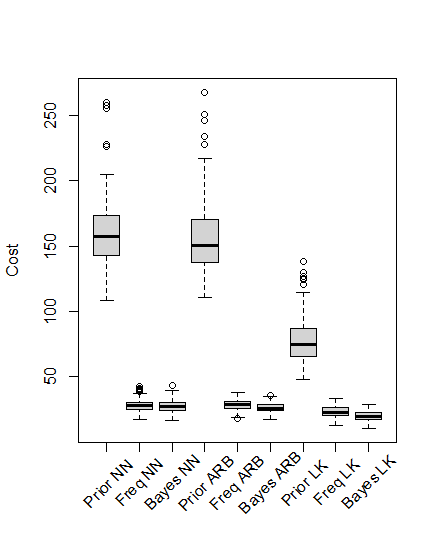}\label{fig:Scenariobm20n381}}
  \caption{Solution Cost Comparison, $m = 20$ Delivery Zones, Scenario (b)}
  \label{fig:CostScenariob}
\end{figure}

Figure \ref{fig:CostScenariob} shows the distributions of the true costs of the proposed solutions for each combination of the three heuristics (NN, ARB, and LK) and the three edge cost estimation methods (Prior, Frequentist, and Bayes) under Scenario (b). Unlike Scenario (a), the LK heuristic has the best performance for $n = 96, 191,$ and $381$ for the prior, frequentist, and Bayesian estimates. In this scenario, LK does substantially better at finding low-cost routes than NN and ARB with only the prior information. All three heuristics see improved results (in terms of lower median cost) when either frequentist or Bayes estimates are used instead of prior only costs. As the sample size increases, there is a larger benefit to using these methods. When $n = 381$, the frequentist and Bayesian methods have very similar performance for the NN and ARB statistics, but when $n = 96, 191$, the Bayesian methods have slightly lower median route costs.

\section{Conclusion}

This paper provides a novel framework for finding low-cost routes for last mile delivery problems via Bayesian D-optimal experimental designs. Theory proved that the full design (with all $(m-1)!/2$ Hamiltonian circuits used exactly once) is $\phi-$ optimal for any criterion $\phi$ that is concave and permutation invariant. A recursive algorithm was developed to construct $\phi-$optimal fractional designs for any number of delivery zones. In the event where these designs were too large, Simulated annealing and bubble sort algorithms were proposed for finding highly efficient designs (subsets of Hamiltonian circuits) to examine. 

In Section 4.2, it was shown that using data-driven approaches, whether frequentist or Bayesian, generally lead to more efficient solutions to the TSP. The results show that the proposed Bayesian estimation paradigm consistently yields low-cost delivery routes when coupled with common heuristics used to solve the TSP.  This is true even for smaller sample sizes, where the Bayesian methods generally found cheaper routes than their frequentist counterparts. The normal Bayesian prior used to fit the proposed model and design the proposed experiment performed well even in a scenario where the true prior was different from the assumed prior. 

In general, finding the shortest Hamiltonian circuits in a graph has many possible applications. In computer science and networking, this methodology would be helpful for finding a route to send packets to $m$ nodes with the lowest total ping. In electrical engineering and city design, this would be helpful for finding a way to arrange power lines between $m$ stations with minimal cost. Furthermore, if smaller optimal fractions are found, then it would be a natural step towards providing efficient solutions for the TSP for larger networks. In its current state, the designs in this paper are also useful for providing a set of initial points to check in an optimization algorithm for the TSP. 

There are several ways this research can be expanded. The model in this paper only examined the effects of edges between vertices in a graph. More complex models could also consider interaction effects between edges in a graph, but there are many possible interactions to consider. Additionally, it may also be of interest to not only minimize the average travel cost, but also the variance of the travel costs. This dual-response optimization problem would require careful attention. Furthermore, it would also be interesting to see how one might search for the optimal route in an online setting where information on route costs comes to a data analyst in batches.

\bibliographystyle{apalike}
\bibliography{biblio}

\appendix

\newpage
\section{Appendix - Proofs}

\begin{lemma}
\label{lemma:permmatrix}
Consider the coding scheme used in model (1). Let $\pi$ be a permutation of $(2,\dots,m)$. For a Hamiltonian circuit $\textbf{a} \in \mathcal{A}$, let $\pi\textbf{a} = (1,a_{\pi_2}, a_{\pi_3} \dots,a_{\pi_m}, 1)$. Then there is a permutation matrix $R_{\pi}$ such that $x(\pi \textbf{a})^T = x(\textbf{a})^TR_{\pi}$.
\end{lemma}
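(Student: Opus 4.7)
The lemma claims that applying $\pi$ to $\textbf{a}$ reshuffles the edge-incidence vector $x(\textbf{a})$ by a single permutation matrix $R_\pi$ depending only on $\pi$. My plan is to interpret $\pi$ as a relabeling of the vertices $\{2,\ldots,m\}$ (fixing vertex $1$) and to realize $R_\pi$ as the matrix of the permutation that $\pi$ induces on the $\binom{m}{2}$ unordered edges. This is the only reading that makes the claim true: a positional-rearrangement reading of $\pi\textbf{a}$ can map distinct Hamiltonian circuits onto the same edge set, and such a non-injective map on $x$-vectors cannot be realized by a permutation matrix.

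By the coding in Model (\ref{eqn:edgemodel}), $x(\textbf{a})$ depends on $\textbf{a}$ only through its edge set $E(\textbf{a}) = \{\{a_i, a_{i+1}\} : 1 \leq i \leq m-1\} \cup \{\{a_m, a_1\}\}$. Extending $\pi$ by $\pi(1) = 1$, the map $\tilde{\pi}(\{j,k\}) = \{\pi(j), \pi(k)\}$ is a bijection of the $\binom{m}{2}$ unordered edges, and reading off $\pi\textbf{a}$ componentwise gives $E(\pi\textbf{a}) = \tilde{\pi}(E(\textbf{a}))$. I then define $R_\pi$ to be the permutation matrix whose rows and columns are indexed by edges in lexicographic order, with $(R_\pi)_{e,f} = 1$ iff $f = \tilde{\pi}(e)$. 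The one-line check
\[
  \bigl[x(\textbf{a})^T R_\pi\bigr]_f \;=\; x_{\tilde{\pi}^{-1}(f)}(\textbf{a}) \;=\; \mathbf{1}\{\tilde{\pi}^{-1}(f) \in E(\textbf{a})\} \;=\; \mathbf{1}\{f \in E(\pi\textbf{a})\} \;=\; x_f(\pi\textbf{a})
\]
for each edge $f$ then yields the claimed identity $x(\pi\textbf{a})^T = x(\textbf{a})^T R_\pi$.

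The only substantive obstacle is conceptual, namely pinning down the correct interpretation of $\pi\textbf{a}$. Once that is fixed, the proof is pure bookkeeping: no delicate combinatorial estimate is required, and $R_\pi$ is completely determined by the induced action of $\pi$ on unordered edges.
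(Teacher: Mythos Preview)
Your proposal is correct and follows essentially the same route as the paper: both arguments extend $\pi$ to fix vertex $1$, observe the key identity $x_{ij}(\textbf{a}) = x_{\pi(i),\pi(j)}(\pi\textbf{a})$, and then take $R_\pi$ to be the permutation matrix of the induced bijection $\{i,j\}\mapsto\{\pi(i),\pi(j)\}$ on unordered edges. Your explicit discussion of why the vertex-relabeling reading (rather than a purely positional one) is the interpretation under which the claim holds is a useful clarification that the paper's proof leaves implicit; with that settled, the remaining bookkeeping in both proofs is identical.
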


\begin{proof}
Let $\pi \in \mathcal{A}$ and $\textbf{a} \in \mathcal{A}$. If $i$ and $j$ are adjacent in $\textbf{a}$, then by definition, $\pi_{\textbf{a}_i}$ must be adjacent to $\pi_{\textbf{a}_j}$. Hence, it follows that $x_{ij}(\textbf{a}) = x_{\pi_i,\pi_j}(\pi\textbf{a})$ for any pair $ij$. Let $R_{\pi}$ be a matrix with columns and rows indexed by the pairs $(12,13,...(m-1)m)$ such that $R_{\pi}(ij,\pi_i,\pi_j) = 1$ for each pair $ij$ and the remaining elements are 0. Then $x(\pi \textbf{a})^T = x(\textbf{a})^TR_{\pi}$ for any $\pi \in \mathcal{A}$. 
\end{proof}

\noindent \textbf{Proof of Corollary \ref{cor:fulloptimal}.}

\begin{proof}
    The structure of this proof is very similar to that of Theorem 1 in \cite{OOADesign} (which is why we denote this as a Corollary). We write the proof in its entirety for clarity and convenience.     

    Let $B = \frac{(m-2)!}{2}$, and let $\mathcal{P}_{2:m} $ denote the set of all $\frac{(m-1)!}{2}$ distinct permutations of the elements $\{2,3,\dots,m\}$ that are invariant under reversals. Notice that $\mathcal{A} = \cup_{b=1}^B \{ (1,\textbf{a}',1) \mid \textbf{a}' \in \mathcal{P}_{2:m}   \}$. This implies that the matrix of all circuits $D_f$ has the general form $D_f = [\mathbf{1},  D_{2:m}]$, where $\mathbf{1}$ is a column vector of ones. Let    $M_f = \sum_{\textbf{a} \in \mathcal{A}} 2/(m-1)! x(\textbf{a})x(\textbf{a})^T = 2/(m-1)! X_f^TX_f$ be the moment matrix for the uniform design measure, and let $w$ be an arbitrary design measure over $\mathcal{A}$. Let $\pi\textbf{a}$ be a Hamiltonian circuit $(1, a_{\pi_2}, a_{\pi_3}, \dots, a_{\pi_m}, 1)$, where $(\pi_2, \dots, \pi_m) \in \mathcal{P}_{2:m}$, and let $\pi w$ be the design measure that assigns, for each $\textbf{a} \in \mathcal{A}$, weight $w(\textbf{a})$ to the circuit $\pi \textbf{a}$. By concavity of $\phi$, we have that \begin{align}
    \label{ineq:concave}
    \phi\Big( \sum_{\pi \in \mathcal{P}_{2:m}} \frac{2}{(m-1)!} M(\pi w) \Big) \geq \sum_{\pi \in \mathcal{P}_{2:m} } \frac{2}{(m-1)!} \phi\Big( M(\pi w) \Big) 
    \end{align} Notice that, for a fixed $\textbf{a} \in \mathcal{A}$, $\{ \pi\textbf{a} \mid \pi \in \mathcal{P}_{2:m}  \} = \mathcal{A},$ which implies \begin{align}
     &\frac{2}{(m-1)!} \sum_{\pi \in \mathcal{P}_{2:m}} M(\pi w) =  \frac{2}{(m-1)!}\sum_{\textbf{a} \in \mathcal{A}} \sum_{\pi \in \mathcal{P}_{2:m}} w(\pi \textbf{a}) x(\pi\textbf{a})x(\pi\textbf{a})^T  \\
      \label{eqn:LHS}
     = & \frac{2}{(m-1)!} \sum_{\textbf{a} \in \mathcal{A}} w(\textbf{a}) \frac{(m-1)!}{2}  M_f = M_f.
    \end{align} Also notice that, by Lemma 1, \begin{align}
    &\phi\Big( M(w)  \Big) = \phi\Big( \sum_{\pi \in \mathcal{P}_{2:m}} w(\pi \textbf{a}) x(\pi\textbf{a})x(\pi\textbf{a})^T \Big) = \\
    \label{eqn:RHS}
    &\phi(R_{\pi}^TM(w)R_{\pi}) = \phi(M(w))
    \end{align} where the last equality follows because $\phi$ is permutation-invariant. By substituting (\ref{eqn:LHS}) and (\ref{eqn:RHS}) into the inequality (\ref{ineq:concave}), we find that \begin{align}
        \phi\Big(  M_f \Big) \geq 
 \frac{2}{(m-1)!}\sum_{\pi \in \mathcal{P}_{2:m} }  \phi\Big( M(w) \Big)  = \phi\Big( M(w) \Big),
    \end{align} for any design measure $w$ over $\mathcal{A}$. This concludes the proof.

\end{proof}

\noindent \textbf{Proof of Theorem \ref{thm:fullmomentmat}}. 
\begin{proof}
Let $m$ be the number of components and $p = {m \choose 2}$. Let $G = (V,E)$ be a complete, undirected graph with vertices $ V = \{1,2,\dots,m\}$. Let $D_f$ be the design which includes all Hamiltonian circuits in $G$ exactly once, and let $X$ be the model matrix expansion of $D_f$ under the symmetric OofA model (1). Arrange the columns of $X$ by the edges $(i,j)$ of $G$, with $i < j$, in lexicographically increasing order, i.e. $(1,2), (1,3), \dots, (2,3), (2,4), \dots (m-1,m)$. Denote the column of $X$ that corresponds to $(i,j)$ as $X_{ij}$. 

Then, by combinatorial arguments, the following must be true: \begin{enumerate}
    \item[(a)] $X_{ij}^TX_{ij} = \sum_{\textbf{a} \in \mathcal{A}} x_{ij}(\textbf{a}) = (m-2)!$, which is the number of times the undirected edge $(i,j)$ appears in $D_f$. This is true because if the edge $(i,j)$ is fixed, there are $(m-2)!$ permutations of the other vertices, each of which corresponds to a unique Hamiltonian circuit in $G$. Since the set $\mathcal{A}$ is invariant under reversals, we do not count reversals of the pair, i.e. we do not count any Hamiltonian circuit formed with $( j, i)$, as it is already counted.
    \item[(b)] $X_{ij}^TX_{j\ell} = \sum_{\textbf{a} \in \mathcal{A}} x_{ij}(\textbf{a})x_{j\ell}(\textbf{a}) = (m-3)!$. To show this, first fix the position of the triple $(i,j,\ell)$. There are $(m-3)!$ possible permutations of the remaining vertices, each of which corresponds to a distinct Hamiltonian circuit in $G$. Since the set $\mathcal{A}$ is invariant under reversals, we do not count reversals of the triple, i.e. we do not count any Hamiltonian circuit formed with $(\ell, j, i)$, as it is already counted.  Similarly, $X_{ij}^TX_{kj} = (m-3)!$.
    \item[(c)] $X_{ij}^TX_{k\ell} =  \sum_{\textbf{a} \in \mathcal{A}} x_{ij}(\textbf{a})x_{k\ell}(\textbf{a}) = 2(m-3)!$, for $i \neq k,\ell$ and $j \neq k, \ell$. To show this, fix the positions of $(i,j)$ and $(k,\ell)$. Suppose the edge $(i,j)$ is placed before $(k,\ell)$. Then, the rest of the Hamiltonian circuit is specified by $(m-2)$ edges; however, once $(m-3)$ edges are selected, the last edge is fixed (as the circuit must return to the starting vertex); therefore, there are $(m-3)!$ possible arrangements of these edges and $(m-3)!$ corresponding Hamiltonian circuits. By symmetry, if $(k,\ell)$ is placed before $(i,j)$, there are also $(m-3)!$ possible Hamiltonian circuits. Since these cases are disjoint, then there are a total of $2(m-3)!$ Hamiltonian circuits in this case.
\end{enumerate}

Let $N = \frac{(m-1)!}{2}$, which is the number of distinct Hamiltonian circuits on $G$. From (a), (b), and (c), we have that \begin{align*}
    M_f &= \frac{1}{N}X^TX = \frac{2}{(m-1)!}\begin{bmatrix}
        X_{12}^TX_{12}  & X_{12}^TX_{13} & \dots & X_{12}^TX_{m-1,m} \\
        X_{13}^TX_{12}  & X_{13}^TX_{13} & \dots & X_{13}^TX_{m-1,m} \\
        \vdots   &  \vdots  & \ddots   & \vdots \\
        X_{m-1,m}^TX_{12}  & X_{m-1,m}^TX_{13}  & \dots  & X_{m-1,m}^TX_{m-1,m}
    \end{bmatrix} \\
    &= \frac{2}{(m-1)!}[ (m-2)!I + (m-3)!Q ] = \frac{2}{(m-1)}I + \frac{2}{(m-1)(m-2)}Q. 
\end{align*} This concludes the proof.

\end{proof}

\begin{proof}[\textbf{Proof of Theorem \ref{thm:recursivedesign}}]
As a base case, consider when $m = 4$. It is easy to verify that under Model (\ref{eqn:edgemodel}), the optimal fractional design $D_4$ has moment matrix \begin{align*}
    (4/4!)X^TX = (4/4!)\begin{bmatrix}
    6, 3, 3, 3 ,3 ,3 ,3 \\
    3, 3, 1, 1, 1, 1, 2 \\
    3, 1, 3, 1, 1, 2, 1 \\
    3, 1, 1, 3, 2, 1, 1 \\
    3, 1, 1, 2, 3, 1, 1 \\
    3, 1, 2, 1, 1, 3, 1 \\
    3, 2, 1, 1, 1, 1, 3
    \end{bmatrix} = (2/4!)\begin{bmatrix}
    12, 6, 6, 6, 6, 6, 6 \\
    6, 6, 2, 2, 2, 2, 4 \\
    6, 2, 6, 2, 2, 4, 2 \\
    6, 2, 2, 6, 4, 2, 2 \\
    6, 2, 2, 4, 6, 2, 2 \\
    6, 2, 4, 2, 2, 6, 2 \\
    6, 4, 2, 2, 2, 2, 6
    \end{bmatrix} 
\end{align*} By Theorem \ref{thm:fullmomentmat}, this matrix is equal to $M_f$ when $m = 4$. By Corollary \ref{cor:fulloptimal}, the fractional design must be $\phi-$ optimal, as it has the same moment matrix as the full design. 

\noindent \textbf{Induction Step.} Suppose that $D_{m-1}$ is an optimal fractional design for $m - 1$ components, with corresponding model matrix expansion $X_{m-1}$. It remains to show that if Algorithm \ref{Alg:getoptdesign} is used to generate the design for $m$ components, it also has moment matrix $M_f$.  

Since $X_{m-1}$ is $\phi-$optimal in $m-1$ components, and it has $\frac{(m-1)!}{4}$ rows, then by Theorem \ref{thm:fullmomentmat} it must satisfy the following properties: \begin{enumerate}
    \item $X_{m-1, ij}^TX_{m-1, ij} = (m-2)!/2$
    \item $X_{m-1, ij}^TX_{m-1, jk} = X_{m-1, ij}^TX_{m-1, ik} = (m-3)!/2$
    \item $X_{m-1, ij}^TX_{m-1, k\ell} = (m-3)!$
    \item $X_{m-1, 0}^TX_{m-1, 0} = (m-1)!/2$ and $X_{m-1, 0}^T X_{ij} = (m-2)!$
\end{enumerate} where $X_{m-1, ij}^TX_{m-1, k\ell}$ is the product of the columns of $X_{m-1}$ indexed by $ij, k\ell$, and $X_{m-1, 0}$ corresponds to the intercept column of $X_{m-1}$.

Consider an arbitrary undirected edge $(i,j)$ where $i < j$ for $i,j =  1,2,\dots,m-1$. Suppose that the pair are adjacent in columns $s-1, s$ of $D_{m-1}$. Then, by construction of the design in Algorithm \ref{Alg:getoptdesign}, the pair $ij$ will also be adjacent in columns $(s-1, s)$, $(s, s+1)$, $(m-s, m-s+1)$, or $(m-s-1,m-s)$ in all but one blocks of the blocks $B_1,\dots,B_m$ of $D_m$. Since there are $(m-2)!/2$ pairs of columns where this occurs, then it must follow that each undirected edge $(i, j)$ appears $(m-1)(m-2)!/2 = (m-1)!/2$ times in $D_m$, where $i, j = 1,2,\dots,m - 1$. 

Suppose a pair of undirected edges $(i,j)$, $(j, k)$ appear in consecutive columns $s-1, s, s+1$ in $D_{m-1}$. By construction of $D_m$, this pair of edges will appear consecutively in all but two of the blocks $B_1,\dots,B_m$. The argument for edges $(i, j), (i, k)$ is identical. Hence $X_{m, ij}^TX_{m, jk} = X^T_{m, ij}X_{m, ik} = (m-2)(m-3)!/2 = (m-2)!/2$ for $i,j,k = 1,2,\dots,m-1$. Similarly, the pair of undirected edges $(i,j), (k,\ell)$ will appear consecutively in all but two of the blocks of $B_1,\dots,B_m$. Hence $X_{m, ij}^TX_{m, k\ell} = (m-2)(m-3)! = (m-2)!$ for $i,j,k = 1,2,\dots,m-1$.

Consider an undirected edge $(i,m)$ where $i < m$. Since each row of $D_{m-1}$ is a valid permutation of $(1,2,\dots,m-1)$, then each vertex $i$ appears exactly once in each of the $\frac{(m-1)!}{4}$ rows of $D_{m-1}$. By construction, the undirected edge $(i,m)$ appears in exactly two blocks of $D_m$. It follows that $X_{m, im}^TX_{m, im} = \frac{2(m-1)!}{4} = (m-1)!/2$. 

Consider the case of consecutive undirected edges $(i,j), (j,m)$ with $i < j < m$. Each undirected edge $(i,j)$ appears with frequency $(m-3)!/2$ in $D_{m-1}$. By construction, the insertion of a column of $m$'s will separate the edge $(i,j)$ in exactly two of the blocks of $D_m$. Hence, these undirected edges are adjacent in all but two of the blocks $B_1,\dots,B_m$. It follows that $X_{m, ij}^TX_{m, jm} = (m-2)(m-3)!/2 = (m-2)!/2$. It remains to consider the general case of undirected edges $(i,j), (k,m)$ with $i<j<k<m$ appearing consecutively in $D_m$. The edges $(i,j),(j,k)$ appear with frequency $(m-3)!$ in $D_{m-1}$. By the same argument, inserting a column of $m$'s into each block of the algorithm will separate the edges $(i,j)$ and $(j,k)$ in exactly two blocks. It follows that $X_{m, ij}^TX_{m, km} = (m-2)(m-3)! = (m-2)!$.

Finally, the run size of the resulting design $D_m$ is $m(m-1)!/4 = m!/4$, so $X_{m, 0}^TX_{m, 0} = m!/4$, and by construction, the number of times each edge will appear in $D_m$ is $(m-1)(m-2)!/2 = (m-1)!/2$, so $X_{m, 0}^TX_{m, ij} = (m-1)!/2$. In summary, we have that:
\begin{enumerate}
    \item $X_{m, ij}^TX_{m, ij} = (m-1)!/2$
    \item $X_{m, ij}^TX_{m, jk} = (m-2)!$ and $X_{m, ij}^TX_{m,ik} = (m-2)!/2$
    \item $X_{m, ij}^TX_{m, k\ell} = (m-2)!$
    \item $X_{m, 0}^TX_{m, 0} = m!/4$ and $X_{m, 0}^TX_{m, ij} = (m-1)!/2$.
\end{enumerate}

Hence \begin{align*}
    \frac{4}{m!}X_m^TX_m = \frac{2}{m!} \begin{bmatrix}
    m!/2 & (m-1)!\mathbbm{1}^T \\
    (m-1)!\mathbbm{1} & (m-1)!I_{p} + (m-2)!Q 
    \end{bmatrix}    
    \end{align*}
    Since the fractional design has the same moment matrix as the full design, then it must be $\phi-$optimal. 
\end{proof}

\noindent \textbf{Proof of Theorem \ref{thm:minproblimit}.}
\begin{proof}
Let $\beta^* = [\beta_{12}^*, \dots, \beta_{m-1,m}^*]^T$ be the true value of $\beta$. Let $\hat{\beta}$ be the estimated model coefficients from model (\ref{eqn:edgemodel}). Notice that \begin{align*}
    \hat{\beta} &= (X^TX + R)^{-1}(X^TY + R\mu) \\
    &= (X^TX + R)^{-1}(X^T(X\beta^* + \epsilon) + R\mu) \\
    &= (X^TX + R)^{-1}(X^TX\beta^* + R\mu) + (X^TX + R)^{-1}(X^T\epsilon) \\
    &= (X^TX/n + R/n)^{-1}(X^TX\beta/n + R\mu/n) + (X^TX/n + R/n)^{-1}(X^T\epsilon/n)
\end{align*}

Let $\textbf{0}$ be a conformable zero matrix. Notice that $R/n \rightarrow \textbf{0}$ and $R\mu/n \rightarrow \textbf{0}$ as $n \rightarrow \infty$. By assumption (a) of Theorem \ref{thm:minproblimit}, it follows that \begin{align*}
    \hat{\beta} \rightarrow_p (Q + \textbf{0})^{-1}(Q\beta^* + \textbf{0}) + Q \cdot \textbf{0} \\
    \hat{\beta} \rightarrow_p Q^{-1}Q\beta^* = \beta^*.
\end{align*} So, $\hat{\beta} \rightarrow_p \beta^*$. It then follows that, for any $\textbf{a} \in \mathcal{A}^*$ \begin{align*}
    \hat{y}(\textbf{a}) = x(\textbf{a})^T\hat{\beta} \rightarrow_p x(\textbf{a})^T\beta^* = \tau(\textbf{a})
\end{align*}

Since this holds for all $\textbf{a} \in \mathcal{A}^*$, it must also hold for $\textbf{a}^*$ and $\hat{\textbf{a}}$. Therefore, we know that $\hat{y}(\hat{\textbf{a}}) \rightarrow_p \tau(\hat{\textbf{a}})$. Let $G$ be a graph on $m$ vertices with edge weights equal to $\beta_{jk}^*$ for $1 \leq j, k \leq m - 1$, and edge weights of 0 for any edge incident to vertex $m$. Then, for sufficiently large $n$, $\tau(\hat{\textbf{a}})$ is the output of a nearest neighbor algorithm on the graph $G$. Then by Assumption (b) and Theorem 1 of \cite{rosenkrantz1974approximate}, $\tau(\hat{\textbf{a}})/\tau(\textbf{a}^*) \leq 0.5\lceil\log_2(m)\rceil + 0.5$, so $\tau(\hat{\textbf{a}}) \leq C\tau(\textbf{a}^*)$ where $C = 0.5\lceil\log_2(m)\rceil + 0.5$. 

If $\tau(\hat{\textbf{a}}) = C\tau(\textbf{a}^*),$ then since $\hat{y}(\hat{\textbf{a}}) \rightarrow_P \tau(\hat{\textbf{a}})$, we have $\hat{y}(\hat{\textbf{a}}) \rightarrow_P C\tau(\textbf{a}^*)$ and the result follows. 

Suppose now that $\tau(\hat{\textbf{a}}) < C\tau(\textbf{a}^*)$. Let $\epsilon$ be a value in the open interval $(0, C\tau(\textbf{a}^*) - \tau(\hat{\textbf{a}}))$. Notice that if $|\hat{y}(\hat{\textbf{a}}) - \tau(\hat{\textbf{a}})| < \epsilon$, then \begin{align*}
    \hat{y}(\textbf{a}) < \tau(\hat{\textbf{a}}) + \epsilon < \tau(\hat{\textbf{a}}) + C\tau(\textbf{a}^*) - \tau(\hat{\textbf{a}}) = C\tau(\textbf{a}^*).
\end{align*} This implies that \begin{align}
    \label{ineq:squeeze1}
    P(|\hat{y}(\hat{\textbf{a}}) - \tau(\hat{\textbf{a}})| < \epsilon) \leq P(\hat{y}(\hat{\textbf{a}}) < C\tau(\textbf{a}^*)) \leq 1.
\end{align} Since $\hat{y}(\hat{\textbf{a}}) \rightarrow_p \tau(\hat{\textbf{a}})$ as $n \rightarrow \infty$, then it follows that the leftmost term in (\ref{ineq:squeeze1}) converges to 1 as $n \rightarrow \infty$. Taking the limit of all terms in (\ref{ineq:squeeze1}) yields \begin{align*}
    1 \leq \lim_{n \rightarrow \infty}  P(\hat{y}(\hat{\textbf{a}}) < C\tau(\textbf{a}^*)) \leq 1
\end{align*} which proves that $\lim_{n \rightarrow \infty}  P(\hat{y}(\hat{\textbf{a}}) < C\tau(\textbf{a}^*)) = 1$ for this case. 

\end{proof}

\end{document}